\documentclass[10pt,conference]{IEEEtran}
\IEEEoverridecommandlockouts
% The preceding line is only needed to identify funding in the first footnote. If that is unneeded, please comment it out.
\usepackage{cite}
\usepackage{amsthm,amsmath,amssymb,amsfonts}
\usepackage{amssymb}
\usepackage{algorithmic}
\usepackage{graphicx}
\usepackage{textcomp}
\usepackage{xcolor}
\usepackage{amsmath}
\usepackage{subfigure}
\usepackage{epstopdf}
\usepackage{xcolor}
\IEEEoverridecommandlockouts
\def\BibTeX{{\rm B\kern-.05em{\sc i\kern-.025em b}\kern-.08em
    T\kern-.1667em\lower.7ex\hbox{E}\kern-.125emX}}
%\IEEEoverridecommandlockouts\IEEEpubid{\makebox[\columnwidth]{ 979-8-3503-1090-0/23/\$31.00~\copyright~2023 IEEE \hfill} \hspace{\columnsep}\makebox[\columnwidth]{ }}
\begin{document}
\newtheorem{lemma}{Lemma}
\newtheorem{theorem}{Theorem}
\newtheorem{corollary}{Corollary}
\newtheorem{remark}{Remark}
\title{Mutual Information Metrics for Uplink MIMO-OFDM Integrated Sensing and Communication System\\

}

\author{\IEEEauthorblockN{Jinghui Piao$^1$, Zhiqing Wei$^{1,*}$, Xin Yuan$^2$, Xiaoyu Yang$^1$, Huici Wu$^1$, Zhiyong Feng$^1$}
	%\text{Key Laboratory of Universal Wireless Communications, Ministry of Education,}\\
	$^1$\text{Beijing University of Posts and Telecommunications, Beijing 100876, China}\\
	$^2$\text{Data61, CSIRO, Sydney, Australia}\\
	$^1$Email: \{piaojinghui, weizhiqing, xiaoyu.yang, dailywu, fengzy\}@bupt.edu.cn \\
	$^2$Email: \{xin.yuan\}@data61.csiro.au\\}
\maketitle

\begin{abstract}
As the uplink sensing has the advantage of easy implementation, it attracts great attention in integrated sensing and communication (ISAC) system. This paper presents an uplink ISAC system based on multi-input multi-output orthogonal frequency division multiplexing (MIMO-OFDM) technology. The mutual information (MI) is introduced as a unified metric to evaluate the performance of communication and sensing. In this paper, firstly, the upper and lower bounds of communication and sensing MI are derived in details based on the interaction between communication and sensing. And the ISAC waveform is optimized by maximizing the weighted sum of sensing and communication MI. The Monte Carlo simulation results show that, compared with other waveform optimization schemes, the proposed ISAC scheme has the best overall performance.
\end{abstract}

\begin{IEEEkeywords}
Integrated Sensing and Communication, Mutual Information, MIMO-OFDM, Uplink Sensing
\end{IEEEkeywords}

\section{Introduction}
Integrated Sensing and Communication (ISAC) is considered to be one of the most promising technologies in the future wireless network. By integrating communication and sensing into the same hardware and wireless resources \cite{feng2020joint}, ISAC systems can sense the environment while transmitting information on wireless channels, which can realize joint scheduling of communication resources and sensing resources, and improve spectrum, hardware and other resource utilization \cite{cheng2022hybrid}.

In the ISAC systems, the types of sensing consist of uplink and downlink sensing \cite{8827589}. Thus, with the differences of performance evaluation in communication and sensing, it is significant to investigate the performance metrics according to the different cases, including Cramer-Rao Bound (CRB) \cite{kumari2019adaptive}\cite{liu2021cramer}, Mean Square Error (MSE)\cite{kumari2019adaptive} as well as Mutual information (MI) \cite{zhang2019mutual,zhang2022joint,yuan2020spatio,ouyang2022performance}, etc.  MI has attracted great attention for waveform design as its similar physical meanings, formulations, and optimization methods between sensing and communication\cite{ouyang2022integrated}. Most of the researches on the MI for ISAC systems focus on downlink communication and sensing systems. Zhang \textit{et al.} \cite{zhang2019mutual} proposed an downlink OFDM waveform design in Gaussian mixture clutter by maximizing the radar MI under the constraint of channel capacity. In \cite{zhang2022joint}, the weighted sum of sensing MI and communication data rate was investigated for ISAC waveform and phase shift design in RIS-Assisted ISAC system. In \cite{yuan2020spatio}, waveform optimization by maximizing weighted sum of communication and sensing MI was proposed for downlink MIMO ISAC system.

 Few works elaborate on the information-theoretical performance metrics of uplink ISAC systems. Ouyang \textit{et al.} \cite{ouyang2022performance} analyzed the uplink ISAC system with mono-static radar, the communication part of which adopts a non-orthogonal multiple access (NOMA) protocol, and derived the exact expressions of MI as well as the approximations under high SNR. However, the researches on the MI-based waveform optimization of uplink ISAC system considering uplink sensing are rare.   

Hence, this paper derives the sensing and communication MI of the uplink MIMO-OFDM ISAC system, and ISAC waveform optimization based on the expressions of MI is proposed. It is considered that the communication and sensing have the same channel. As the sensing channel is unknown, the upper and lower bounds of communication MI are derived by regarding the sensing channel as the interference. Based on the demodulation signal, the upper and lower bounds of sensing MI are derived considering the demodulation error. Finally, the ISAC waveform is optimized based on the weighted sum of sensing and communication MI. The simulation results show the reasonability of the proposed ISAC waveform optimization scheme and the trade-off in performance between communication and sensing.

The rest of this paper is organized as follows. In Sec. \uppercase\expandafter{\romannumeral2}, the uplink MIMO-OFDM ISAC system model is described. In Sec. \uppercase\expandafter{\romannumeral3}, the upper and lower bounds of communication MI and sensing MI are derived. Sec. \uppercase\expandafter{\romannumeral4} proposes a waveform optimization scheme based on maximizing the weighted sum of derived MIs. In Sec. \uppercase\expandafter{\romannumeral5}, we provide the simulation results. 

\section{System Model}

\begin{figure}[!t]
	\centering
	\includegraphics[width=0.4\textwidth]{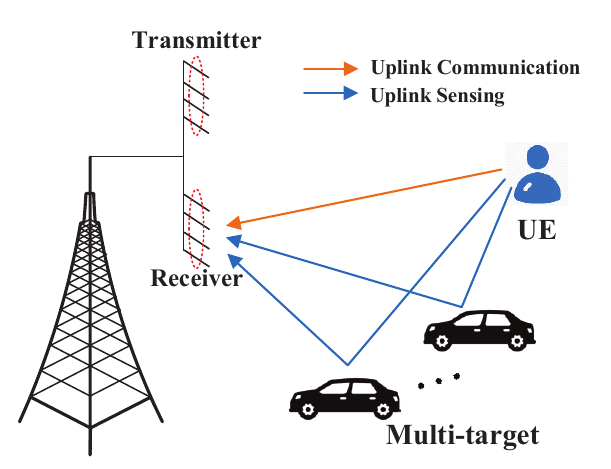}
	\caption{MIMO-OFDM ISAC system.}
	\label{fig}
\end{figure}
A MIMO-OFDM ISAC system is considered in this paper, which has a user device (UE) and multiple radar targets, as shown in Fig. 1. The base station (BS) consists of two spatially separated antenna arrays, i.e., ${{N}_{t}}$ transmit antenna elements and ${{N}_{r}}$ receive antenna elements. We consider uplink system where BS receives signals from UE for sensing and communication. The BS first regards the sensing as the interference to decode the communication signals. And the decoding signals are utilized to sense the environment\cite{zhang2021enabling}. UE is equipped with ${{N}_{t}}$ transmit antenna elements, and each transmit antenna transmits OFDM signals of ${{N}_{c}}$ subcarriers. 

Let $x_{n}^{\mu }\left( p\right) $ represent the data on the $\mu$-th transmit antenna element during the $n$-th symbol interval time on the $p$-th subcarrier. $\left\lbrace x_{n}^{\mu }(p), p=0,1,\cdots ,{{N}_{c}}-1\right\rbrace $ is processed and sent to the fading channel. And ${{h}_{\mu \nu }}(l)$ and ${{g}_{\mu \nu }}(l)$ denotes the coefficient of time domain communication channel ${{\mathbf{H}}_{l}}\in {{\mathbb{C}}^{{{N}_{t}}\times {{N}_{r}}}}$ and sensing channel ${{\mathbf{G}}_{l}}\in {{\mathbb{C}}^{{{N}_{t}}\times {{N}_{r}}}}$ between the $\mu$-th transmit antenna element
and the $\nu$-th receive antenna element on the $l$-th path, respectively. 

After the Fast Fourier Transform (FFT) processing, the received data from UE on the  $\nu$-th receive antenna of BS can be expressed as
\begin{equation}
y_{n}^{\nu }(p)=\sum\limits_{\mu =1}^{{{N}_{t}}}{{{H}_{\mu \nu }}}(p)x_{n}^{\mu }(p)+\sum\limits_{\mu =1}^{{{N}_{t}}}{{{G}_{\mu \nu }}(p)x_{n}^{\mu }(p)}+w_{n}^{\nu }(p)\label{eq1},	
\end{equation}
where ${{H}_{\mu \nu }}\left( p\right) =\sum\nolimits_{l=0}^{{{L}_{c}}}{{{h}_{\mu \nu }}(l){{e}^{-j(2\pi /{{N}_{c}})lp}}}$, ${{G}_{\mu \nu }}(p)=\sum\nolimits_{l=0}^{{{L}_{s}}}{{{g}_{\mu \nu }}(l){{e}^{-j(2\pi /{{N}_{c}})lp}}}$. $L_c$ and $L_s$ represents the communication paths and sensing paths, respectively. $w_{n}^{\nu }(p)$ represents the noise matrix, which follows independent and identically distributed (i.i.d.) zero-mean circularly symmetric complex Gaussian (CSCG) distribution with variance $\sigma _{n}^{2}$, i.e., $\mathcal{C}\mathcal{N}(0,\sigma _{n}^{2})$.

Hence, the received signal at the $p$-th subcarrier, denoted by $\mathbf{Y}(p)$, can be expressed as
\begin{equation}
\mathbf{Y}(p)=\mathbf{X}(p)\mathbf{H}(p)+\mathbf{X}(p)\mathbf{G}(p)+\mathbf{W}(p)\label{eq2},
\end{equation}
where
$\mathbf{H}(p)=\sum\nolimits_{l=0}^{L_c}{{{\mathbf{H}}_{l}}{{e}^{-j(2\pi /{{N}_{c}})lp}}}\in {{\mathbb{C}}^{{{N}_{t}}\times {{N}_{r}}}},$
$\mathbf{G}(p)=\sum\nolimits_{l=0}^{L_s}{{{\mathbf{G}}_{l}}{{e}^{-j(2\pi /{{N}_{c}})lp}}}\in {{\mathbb{C}}^{{{N}_{t}}\times {{N}_{r}}}},$\\
$\mathbf{X}(p)=\left[ \begin{matrix}
x_{0}^{1}(p) & \ldots  & x_{0}^{{{N}_{t}}}(p)  \\
\vdots  & {} & \vdots   \\
x_{{{N}_{x}}-1}^{1}(p) & \ldots  & x_{{{N}_{x}}-1}^{{{N}_{t}}}(p)  \\
\end{matrix} \right]\in {{\mathbb{C}}^{{{N}_{x}}\times {{N}_{t}}}},$\\
$\mathbf{Y}(p)=\left[ \begin{matrix}
y_{0}^{1}(p) & \ldots  & y_{0}^{{{N}_{r}}}(p)  \\
\vdots  & {} & \vdots   \\
y_{{{N}_{x}}-1}^{1}(p) & \ldots  & y_{{{N}_{x}}-1}^{{{N}_{r}}}(p)  \\
\end{matrix} \right]\in {{\mathbb{C}}^{{{N}_{x}}\times {{N}_{r}}}},$\\
$\mathbf{W}(p)=\left[ \begin{matrix}
w_{0}^{1}(p) & \ldots  & w_{0}^{{{N}_{r}}}(p)  \\
\vdots  & {} & \vdots   \\
w_{{{N}_{x}}-1}^{1}(p) & \ldots  & w_{{{N}_{x}}-1}^{{{N}_{r}}}(p)  \\
\end{matrix} \right]\in {{\mathbb{C}}^{{{N}_{x}}\times {{N}_{r}}}}.$ \\
where $\mathbf{H}(p)$ and $\mathbf{G}(p)$ are communication and sensing frequency domain channel coefficient matrix at the  $p$-th subcarrier, respectively. Thus, the received signal for ${{N}_{c}}$ subcarriers is
\begin{equation}
\mathbf{Y}=\mathbf{X}\mathbf{H}+\mathbf{X}\mathbf{G}+\mathbf{W}\label{eq3},
\end{equation}
where
$\mathbf{X}=\text{diag}\left\lbrace \mathbf{X}(0),\cdots ,\mathbf{X}({{N}_{c}}-1)\right\rbrace \in {{\mathbb{C}}^{{{N}_{c}}{{N}_{x}}\times {{N}_{c}}{{N}_{t}}}},$
$\mathbf{G}={{\left[ {{\mathbf{G}}^{T}}(0),\cdots ,{{\mathbf{G}}^{T}}({{N}_{c}}-1)\right] }^{T}}\in {{\mathbb{C}}^{{{N}_{c}}{{N}_{t}}\times {{N}_{r}}}},$\\
$\mathbf{H}={{\left[ {{\mathbf{H}}^{T}}(0),\cdots ,{{\mathbf{H}}^{T}}({{N}_{c}}-1)\right] }^{T}}\in {{\mathbb{C}}^{{{N}_{c}}{{N}_{t}}\times {{N}_{r}}}},$\\
$\mathbf{Y}={{\left[ {{\mathbf{Y}}^{T}}(0),\cdots ,{{\mathbf{Y}}^{T}}({{N}_{c}}-1)\right] }^{T}}\in {{\mathbb{C}}^{{{N}_{c}}{{N}_{x}}\times {{N}_{r}}}},$\\
and $\mathbf{W}={{\left[ {{\mathbf{W}}^{T}}(0),\cdots ,{{\mathbf{W}}^{T}}({{N}_{c}}-1)\right] }^{T}}\in {{\mathbb{C}}^{{{N}_{c}}{{N}_{x}}\times {{N}_{r}}}}.$

Note that for the uplink system, the channel are the same for communication and sensing\cite{zhang2021enabling}. Thus, it is assumed that ${{\mathbf{H}}_{l}}$ and ${{\mathbf{G}}_{l}}$ contains different paths of channel \cite{ni2022uplink}. It is also assumed that only the spatial correlation of transmit antenna is considered, and the channel matrix of different path is independent. Hence, ${{\mathbf{G}}_{l}}$ can be decomposed by
${{\mathbf{G}}_{l}}=\mathbf{R}_{s,l}^{\frac{1}{2}}{{\mathbf{G}}_{\omega ,l}},$
where $\mathbf{R}_{s,l}$ represent the correlation matrix, and the elements in ${{\mathbf{G}}_{\omega ,l}}\in {{\mathbb{C}}^{{{N}_{t}}\times {{N}_{r}}}}$ follows the i.i.d. zero mean CSCG distribution with variance $\sigma _{l}^{2}$, i.e., $\mathcal{C}\mathcal{N}(0,\sigma _{l}^{2})$. Since ${{\mathbf{G}}_{l}}$ contains paths needed to be sensed, which is unknown, we only assume that the $\mathbf{R}_{s,l}$ has been known to the BS.  

\addtolength{\topmargin}{0.05in}

\section{Mutual Information}
In this section, the communication and sensing MI expressions are derived based on the MIMO-OFDM system model given in the Sec. \uppercase\expandafter{\romannumeral1}. With the superposed received signal of sensing and communication, MI based on the effect of sensing on communication (the effect of communication on sensing) is derived in details and the upper and lower bounds of communication and sensing MI are given. 

\subsection{Communication MI}\label{AA}
As the sensing part is regarded as interference, (\ref{eq3}) can be transformed into
\begin{equation} 
\mathbf{Y}={\mathbf{X}}\mathbf{H}+{{\mathbf{W}}_{1}}\label{eq4},	
\end{equation}
where ${{\mathbf{W}}_{1}}={\mathbf{X}}{\mathbf{G}}+{\mathbf{W}}$ is regarded as the new noise interference. Its mean is $\mathbf{0}$, and its covariance matrix can be expressed as
\begin{equation}
\begin{aligned}
 E\left[ {{({{\mathbf{W}}_{1}})}^{H}}{{\mathbf{W}}_{1}}\right]&=E\left[ {{\mathbf{G}}^{H}}{{\mathbf{X}}^{H}}{\mathbf{X}}{\mathbf{G}}\right] +E\left[ {{\mathbf{W}}^{H}}{\mathbf{W}}\right]  \\ 
 &=E\left[ {{\mathbf{G}}^{H}}{{\mathbf{X}}^{H}}{\mathbf{X}}\mathbf{G}\right] +{{N}_{x}}{{N}_{c}}\sigma _{n}^{2}{{\mathbf{I}}_{{{N}_{r}}}}.\label{eq5} \\ 
\end{aligned} 
\end{equation}
\begin{lemma}
	The covariance matrix (\ref{eq5}) can be transformed into
	\begin{equation}
	\begin{split}
	E\left[ {{({{\mathbf{W}}_{1}})}^{\mathbf{H}}}{{\mathbf{W}}_{1}}\right] &={{N}_{x}}{{\mathbf{R}}_{{{W}_{1}}}}\\
	&={\rm{tr}}(E\left[ {{\mathbf{X}}^{H}}{\mathbf{X}}\right] {{\Sigma }_{\mathbf{G}}}){{\mathbf{I}}_{{{N}_{r}}}}+{{N}_{x}}\sigma _{n}^{2}{{\mathbf{I}}_{{{N}_{r}}}}\label{eq6},
	\end{split}
	\end{equation}	
	where ${{\mathbf{R}}_{{{W}_{1}}}}=\left( {\rm{tr}}(E[{{\mathbf{X}}^{H}}{\mathbf{X}}]{{\Sigma }_{\mathbf{G}}})/{{N}_{x}}+\sigma _{n}^{2} \right){{\mathbf{I}}_{{{N}_{r}}}}$.
\end{lemma}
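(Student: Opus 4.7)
The plan is to decompose the covariance in (\ref{eq5}) into the $\mathbf{G}$-contribution $E[\mathbf{G}^H \mathbf{X}^H \mathbf{X} \mathbf{G}]$ and the noise contribution $E[\mathbf{W}^H \mathbf{W}]$, and treat each separately. The main work sits in the $\mathbf{G}$-term, where the goal is to show that the resulting $N_r \times N_r$ matrix is a scalar multiple of $\mathbf{I}_{N_r}$ whose scaling factor is a trace of the input covariances. The noise term reduces immediately by the i.i.d.\ $\mathcal{CN}(0,\sigma_n^2)$ assumption on the entries of $\mathbf{W}$ to a constant multiple of $\mathbf{I}_{N_r}$.

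First I would exhibit the column structure of $\mathbf{G}$. Using $\mathbf{G}_l = \mathbf{R}_{s,l}^{1/2} \mathbf{G}_{\omega,l}$ with i.i.d.\ $\mathcal{CN}(0,\sigma_l^2)$ entries in $\mathbf{G}_{\omega,l}$, the columns of each $\mathbf{G}_l$ are i.i.d.\ $\mathcal{CN}(\mathbf{0},\sigma_l^2 \mathbf{R}_{s,l})$, and the assumed independence across paths and across receive columns is preserved when the frequency-domain blocks $\mathbf{G}(p) = \sum_l \mathbf{G}_l e^{-j(2\pi/N_c)lp}$ are vertically stacked into $\mathbf{G}$. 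So the $N_r$ columns of the stacked $\mathbf{G}$ are i.i.d.\ zero-mean complex Gaussian vectors sharing a common covariance $\Sigma_{\mathbf{G}} \in \mathbb{C}^{N_c N_t \times N_c N_t}$ whose $(p,q)$-block equals $\sum_l \sigma_l^2 \mathbf{R}_{s,l} e^{-j(2\pi/N_c)(p-q)l}$. The explicit form of $\Sigma_{\mathbf{G}}$ is not actually required for the lemma; only the column-i.i.d.\ property is used.

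Next I would apply the standard quadratic-form identity: for any deterministic $\mathbf{A}$, $E[\mathbf{G}^H \mathbf{A} \mathbf{G}]_{ij} = \mathrm{tr}\!\left(\mathbf{A}\, E[\mathbf{g}_j \mathbf{g}_i^H]\right)$, which vanishes off-diagonal by column independence and equals $\mathrm{tr}(\mathbf{A}\Sigma_{\mathbf{G}})$ on the diagonal. Since $\mathbf{X}$ is independent of $\mathbf{G}$, I would condition on $\mathbf{X}$, apply the identity with $\mathbf{A} = \mathbf{X}^H \mathbf{X}$, and then take the outer expectation over $\mathbf{X}$ to obtain $E[\mathbf{G}^H \mathbf{X}^H \mathbf{X} \mathbf{G}] = \mathrm{tr}\!\left(E[\mathbf{X}^H \mathbf{X}] \Sigma_{\mathbf{G}}\right) \mathbf{I}_{N_r}$. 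Adding the noise contribution and factoring out $N_x$ gives the desired form (\ref{eq6}), with $\mathbf{R}_{W_1}$ as stated.

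The main subtlety is verifying the column-i.i.d.\ structure of the stacked $\mathbf{G}$ through the DFT-style summation over paths and subcarriers; once that is in hand, the remainder is a one-line invocation of the quadratic-form identity together with the independence of $\mathbf{X}$ and $\mathbf{G}$, so no nontrivial matrix manipulations beyond these are required.
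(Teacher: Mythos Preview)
Your proposal is correct and follows essentially the same approach as the paper: both arguments reduce to recognizing that the $N_r$ columns of the stacked $\mathbf{G}$ are i.i.d.\ zero-mean with common covariance $\Sigma_{\mathbf{G}}$, and then invoking the quadratic-form identity $E[\mathbf{G}^H\mathbf{A}\mathbf{G}]=\mathrm{tr}(\mathbf{A}\Sigma_{\mathbf{G}})\mathbf{I}_{N_r}$ (after conditioning on $\mathbf{X}$). The only cosmetic difference is that the paper introduces an explicit DFT-Kronecker factorization $\mathbf{G}=\mathbf{\Omega}\mathbf{g}$ with $\mathbf{\Omega}(p)=\mathbf{I}_{N_t}\otimes\boldsymbol{\omega}(p)$ to exhibit the per-receive-antenna vectors $\mathbf{g}^{\nu}$, whereas you argue the column-i.i.d.\ property directly from the Kronecker model $\mathbf{G}_l=\mathbf{R}_{s,l}^{1/2}\mathbf{G}_{\omega,l}$; these are two ways of packaging the same fact.
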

\begin{proof}
	The proof is provided in Appendix A.
\end{proof}

Communication MI is defined as the MI between the transmitted signal and the received signal under the knowledge of channel state information (CSI), which can be expressed as
\begin{equation}	I(\mathbf{X};\mathbf{Y}|\mathbf{H})=h(\mathbf{Y}|\mathbf{H})-h({{\mathbf{W}}_{1}})\label{eq7}.
\end{equation}

Affected by sensing, it is impossible to determine the distribution of ${{\mathbf{W}}_{1}}$. Thus, the upper and lower bounds of communication MI are given as follows.

\begin{theorem}
	The lower and upper bounds of communication MI are respectively 
	\begin{equation}
	I_{\rm{low}}^{\rm{com}}={{N}_{x}}{{\log }_{2}}\det \left( {{\mathbf{I}}_{{{N}_{r}}{{N}_{c}}}}+{{\mathbf{H}}^{H}}{{\Sigma }_{\mathbf{X}}}\mathbf{H}({{\mathbf{I}}_{{{N}_{c}}}}\otimes \mathbf{R}_{{{W}_{1}}}^{-1}) \right)\label{eq8},
	\end{equation}
	\begin{equation}
	\begin{aligned}
	I_{\rm{high}}^{\rm{com}}=&I_{\rm{low}}^{\rm{com}}+{{N}_{x}}{{\log }_{2}}\det ({{\mathbf{I}}_{{{N}_{c}}}}\otimes {{\mathbf{R}}_{{{W}_{1}}}})\\
	&-{{N}_{r}}{{\log }_{2}}\left[ \det \left( \mathbf{X}{{\Sigma }_{\mathbf{G}}}{{\mathbf{X}}^{H}}+\sigma _{n}^{2}{{\mathbf{I}}_{{{N}_{x}}{{N}_{c}}}}\right)  \right]\label{eq9}.
	\end{aligned}
	\end{equation}	 
\end{theorem}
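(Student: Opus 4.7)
My plan is to decompose the communication mutual information via (\ref{eq7}) and bound the two differential entropy terms separately, using two opposing extremal entropy inequalities for $h(\mathbf{W}_1)$.

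For the lower bound (\ref{eq8}), I would invoke the maximum-entropy property of the complex Gaussian to upper-bound $h(\mathbf{W}_1)$ by the differential entropy of a CSCG surrogate $\mathbf{W}_1^G$ having the same second-order statistics as those computed in Lemma 1, i.e.\ $\mathbf{I}_{N_c}\otimes\mathbf{R}_{W_1}$. The term $h(\mathbf{Y}|\mathbf{H})$ is then evaluated under the assumption that $\mathbf{X}$ is itself matrix-variate CSCG with covariance $\Sigma_{\mathbf{X}}$, so that $\mathbf{Y}|\mathbf{H}$ is CSCG with covariance $\mathbf{H}^H\Sigma_{\mathbf{X}}\mathbf{H} + \mathbf{I}_{N_c}\otimes\mathbf{R}_{W_1}$. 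Subtracting the two log-determinants, invoking Sylvester's identity $\det(\mathbf{I}+\mathbf{A}\mathbf{B}) = \det(\mathbf{I}+\mathbf{B}\mathbf{A})$, and tracking the matrix-variate row exponent $N_x$ collapses the expression into (\ref{eq8}).

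For the upper bound (\ref{eq9}), I would instead use the opposite inequality $h(\mathbf{W}_1) \ge h(\mathbf{W}_1\mid\mathbf{X})$. Conditioned on $\mathbf{X}$, the noise $\mathbf{W}_1 = \mathbf{X}\mathbf{G} + \mathbf{W}$ is exactly CSCG, as a linear combination of the independent Gaussian matrices $\mathbf{G}$ and $\mathbf{W}$; its vectorized conditional covariance is $\mathbf{I}_{N_r}\otimes(\mathbf{X}\Sigma_{\mathbf{G}}\mathbf{X}^H + \sigma_n^2 \mathbf{I}_{N_xN_c})$, which produces the $N_r\log_2\det(\mathbf{X}\Sigma_{\mathbf{G}}\mathbf{X}^H + \sigma_n^2\mathbf{I})$ term in (\ref{eq9}). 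Adding and subtracting the Gaussian surrogate used for the lower bound, the bound can be organized as
\[
I_{\mathrm{high}}^{\mathrm{com}} = I_{\mathrm{low}}^{\mathrm{com}} + \bigl[h(\mathbf{W}_1^G) - h(\mathbf{W}_1\mid\mathbf{X})\bigr],
\]
and the bracketed difference is precisely the last two log-determinant terms of (\ref{eq9}).

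The main obstacle will be bookkeeping across two Kronecker conventions that appear side-by-side. The Gaussian surrogate $\mathbf{W}_1^G$ has the column-wise structure $\mathbf{I}_{N_c}\otimes\mathbf{R}_{W_1}$ with a scalar per-entry variance, whereas the conditional covariance used for the upper bound has the row-wise structure $\mathbf{I}_{N_r}\otimes(\mathbf{X}\Sigma_{\mathbf{G}}\mathbf{X}^H + \sigma_n^2\mathbf{I})$; the corresponding prefactors $N_x$ and $N_r$ sit on opposite sides of the matrix-variate entropy formula. Ensuring that the $\pi e$ constants cancel between $h(\mathbf{Y}|\mathbf{H})$ and the noise entropies, and that $\Sigma_{\mathbf{X}}$ (rather than a per-row covariance) appears in the correct slot so the lower-bound expression aligns with (\ref{eq8}) after Sylvester's identity, is where the careful tracking will be needed.
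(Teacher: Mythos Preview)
Your upper-bound argument is essentially the paper's Appendix~C: the paper writes $I(\mathbf{X};\mathbf{Y}|\mathbf{H})=h(\mathbf{Y}|\mathbf{H})-h(\mathbf{Y}|\mathbf{X},\mathbf{H})$, upper-bounds $h(\mathbf{Y}|\mathbf{H})$ by the Gaussian with the matching covariance, and evaluates $h(\mathbf{Y}|\mathbf{X},\mathbf{H})=h(\mathbf{X}\mathbf{G}+\mathbf{W}\mid\mathbf{X})$ exactly as Gaussian. Your route through $h(\mathbf{W}_1)\ge h(\mathbf{W}_1\mid\mathbf{X})$ reaches the same place with one superfluous inequality, since $h(\mathbf{Y}|\mathbf{X},\mathbf{H})$ already \emph{equals} $h(\mathbf{W}_1\mid\mathbf{X})$.

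Your lower-bound argument, however, has a real gap. You propose to upper-bound $h(\mathbf{W}_1)$ by its Gaussian surrogate and then ``evaluate'' $h(\mathbf{Y}|\mathbf{H})$ by asserting that $\mathbf{Y}|\mathbf{H}$ is CSCG once $\mathbf{X}$ is CSCG. That assertion is false: $\mathbf{W}_1=\mathbf{X}\mathbf{G}+\mathbf{W}$ is a product of Gaussians plus noise, so even with Gaussian $\mathbf{X}$ the received $\mathbf{Y}$ is not Gaussian. What you actually obtain for $h(\mathbf{Y}|\mathbf{H})$ is an \emph{upper} bound via max-entropy, and upper-minus-upper is not a lower bound on $I$. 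The expression you land on coincides with (\ref{eq8}), but the inequality direction is unjustified.

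The paper sidesteps this by using the \emph{other} chain-rule decomposition for the lower bound (Appendix~B):
\[
I(\mathbf{X};\mathbf{Y}|\mathbf{H})=h(\tilde{\mathbf{X}}|\mathbf{h})-h(\tilde{\mathbf{X}}|\mathbf{Y},\mathbf{h}),
\]
with $\tilde{\mathbf{X}}=\mathbf{X}\mathbf{\Omega}$. The first term is computed exactly under the Gaussian input model, and the second is upper-bounded by the Gaussian entropy of the LMMSE error covariance $\mathbf{Q}_p-\mathbf{Q}_p\mathbf{h}(\mathbf{h}^{H}\mathbf{Q}_p\mathbf{h}+\mathbf{R}_{W_1})^{-1}\mathbf{h}^{H}\mathbf{Q}_p$. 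This is the standard ``worst-case Gaussian noise'' device and yields a legitimate lower bound; the log-determinant identity then collapses it to (\ref{eq8}). If you want to stay with the $h(\mathbf{Y}|\mathbf{H})-h(\mathbf{W}_1)$ split, you would need to invoke that worst-case-noise lemma explicitly rather than claim $\mathbf{Y}|\mathbf{H}$ is Gaussian.
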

\begin{proof}
	The proof for (\ref{eq8}) and (\ref{eq9}) is provided in Appendix B and C, respectively.
\end{proof}
\subsection{Sensing MI}
After decoding the signal, communication part will be subtracted from $\mathbf{Y}$, while the rest part will be used for sensing. However, the demodulation error $\mathbf{E}$ will also be introduced, namely
\begin{equation}
\mathbf{X}=\hat{\mathbf{X}}+\mathbf{E}\label{eq10},	
\end{equation}
where $\mathbf{E}=\text{diag}\{\mathbf{E}(0),\cdots ,\mathbf{E}({{N}_{c}}-1)\}\in {{\mathbb{C}}^{{{N}_{c}}{{N}_{x}}\times {{N}_{c}}{{N}_{t}}}}$, $\mathbf{E}(p)={{\left[ {{e}_{ij}} \right]}_{{{N}_{x}}\times {{N}_{t}}}}$. $\hat{\mathbf{X}}$ and $\mathbf{E}$ are independent of each other \cite{gomadam2007optimal}. (\ref{eq10}) shows that the effect of communication is introduced into sensing. Hence, (\ref{eq3}) is transformed into
\begin{equation}
{{\mathbf{Y}}_{\rm{rad}}}=\hat{\mathbf{X}}\mathbf{G}+\mathbf{E}\mathbf{G}+\mathbf{W}=\hat{\mathbf{X}}\mathbf{G}+{{\mathbf{W}}_{2}}\label{eq11},
\end{equation}
where the mean of ${{\mathbf{W}}_{2}}$ is $\mathbf{0}$, and its covariance matrix can be expressed as
\begin{equation}
\begin{aligned}
 E\left[ {{\mathbf{W}}_{2}}{{({{\mathbf{W}}_{2}})}^{H}}\right] &=E\left[ \mathbf{E}\mathbf{G}{{\mathbf{G}}^{H}}{{\mathbf{E}}^{H}}\right] +E\left[ \mathbf{W}{{\mathbf{W}}^{H}}\right]  \\ 
& =E\left[ \mathbf{E}\mathbf{G}{{\mathbf{G}}^{H}}{{\mathbf{E}}^{H}}\right] +{{N}_{r}}\sigma _{n}^{2}{{\mathbf{I}}_{{{N}_{x}}{{N}_{c}}}}\label{eq12}. \\ 
\end{aligned}
\end{equation}	
\begin{lemma}
	The covariance matrix (\ref{eq12}) can be transformed into
	\begin{equation}
	\begin{aligned}
	&E\left[ {{\mathbf{W}}_{2}}{{({{\mathbf{W}}_{2}})}^{H}}\right] 
	={{N}_{r}}{{\mathbf{R}}_{{{W}_{2}}}}\\
	&={\rm{tr}}\left( E\left[ \sum\nolimits_{l=0}^{L}{{{\mathbf{G}}_{l}}\mathbf{G}_{l}^{H}} \right]{{\Sigma }_{{\bar{\mathbf{E}}}}} \right){{\mathbf{I}}_{{{N}_{x}}{{N}_{c}}}}+{{N}_{r}}\sigma _{n}^{2}{{\mathbf{I}}_{{{N}_{x}}{{N}_{c}}}}\label{eq13},
	\end{aligned}
	\end{equation}
	where ${{\mathbf{R}}_{{{W}_{2}}}}=\left( {\rm{tr}}\left( E\left[ \sum\nolimits_{l=0}^{L}{{{\mathbf{G}}_{l}}\mathbf{G}_{l}^{H}} \right]{{\Sigma }_{{\bar{\mathbf{E}}}}} \right)/{{N}_{r}}+\sigma _{n}^{2} \right){{\mathbf{I}}_{{{N}_{x}}{{N}_{c}}}}$ and ${{\Sigma }_{{\bar{\mathbf{E}}}}}$ represents the covariance matrix of demodulation error at a single subcarrier.
	
\end{lemma}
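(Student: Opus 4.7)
The plan is to mirror the structure of Lemma 1 but on the opposite side of the product, now targeting $E[\mathbf{E}\mathbf{G}\mathbf{G}^H\mathbf{E}^H]$. Because $\mathbf{E}$ is block-diagonal across subcarriers while $\mathbf{G}$ is vertically stacked, the $(p,q)$-block of $\mathbf{E}\mathbf{G}\mathbf{G}^H\mathbf{E}^H$ (a matrix of size $N_x \times N_x$) equals $\mathbf{E}(p)\mathbf{G}(p)\mathbf{G}(q)^H\mathbf{E}(q)^H$. I would first invoke the independence between the demodulation error $\mathbf{E}$ and the sensing channel $\mathbf{G}$ (stated just before (\ref{eq10})) to condition on $\mathbf{E}$ and compute the channel expectation in isolation.

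For the off-diagonal blocks ($p\neq q$), I would argue that the demodulation errors on distinct subcarriers are uncorrelated and zero-mean, so the remaining quantity $E[\mathbf{E}(p)\,\mathbf{M}(p,q)\,\mathbf{E}(q)^H]$ vanishes regardless of the cross-subcarrier channel correlation $\mathbf{M}(p,q)=E[\mathbf{G}(p)\mathbf{G}(q)^H]$. For the diagonal blocks, I would expand $\mathbf{G}(p)=\sum_{l}\mathbf{G}_l e^{-j2\pi lp/N_c}$ and use the path-independence assumption in the system model so that cross-path terms vanish and $E[\mathbf{G}(p)\mathbf{G}(p)^H]=\sum_{l}E[\mathbf{G}_l\mathbf{G}_l^H]$, a quantity independent of $p$. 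What remains inside each diagonal block is a quadratic form $E[\mathbf{E}(p)\,\mathbf{M}\,\mathbf{E}(p)^H]$ with $\mathbf{M}=\sum_l E[\mathbf{G}_l\mathbf{G}_l^H]$; expanding entrywise and using the row-wise i.i.d. structure of $\mathbf{E}(p)$ with per-row covariance $\Sigma_{\bar{\mathbf{E}}}$ collapses this to $\mathrm{tr}(\mathbf{M}\Sigma_{\bar{\mathbf{E}}})\mathbf{I}_{N_x}$, i.e.\ a scalar multiple of the identity.

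Assembling the $N_c$ identical diagonal blocks yields the first term on the right-hand side of (\ref{eq13}) with a factor $\mathbf{I}_{N_x N_c}$, and the noise contribution $E[\mathbf{W}\mathbf{W}^H]=N_r\sigma_n^2\mathbf{I}_{N_x N_c}$ already computed in (\ref{eq12}) supplies the second term; pulling $1/N_r$ out then identifies $\mathbf{R}_{W_2}$ as claimed. The main obstacle I foresee is bookkeeping on the covariance conventions: unlike Lemma 1, where the trace is taken against the transmit/subcarrier side and the identity lives on the $N_r$-dimensional receive axis, here the roles are swapped and the trace identity $E[\mathbf{A}\mathbf{M}\mathbf{A}^H]=\mathrm{tr}(\mathbf{M}\Sigma)\mathbf{I}$ must be applied with $\Sigma$ being the row covariance of $\mathbf{E}(p)$. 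Care is also needed because the independence of $\mathbf{E}(p)$ across $p$ and the i.i.d.\ row structure within each subcarrier are only implicit in the model and must be used consistently; once these are granted, the computation parallels Appendix A line by line with transpose-Hermitian roles exchanged.
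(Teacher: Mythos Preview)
Your proposal is correct and follows essentially the same route as the paper's Appendix~D: reduce $E[\mathbf{E}\mathbf{G}\mathbf{G}^H\mathbf{E}^H]$ to per-subcarrier blocks, use path independence to show $E[\mathbf{G}(p)\mathbf{G}(p)^H]=E\big[\sum_l\mathbf{G}_l\mathbf{G}_l^H\big]$ is $p$-independent, apply the trace identity to collapse the quadratic form, and finally invoke the i.i.d.\ assumption $\Sigma_{\mathbf{E}(p)}=\Sigma_{\bar{\mathbf{E}}}$ across subcarriers. One small correction: what is stated just before (\ref{eq10}) is the independence of $\hat{\mathbf{X}}$ and $\mathbf{E}$, not of $\mathbf{E}$ and $\mathbf{G}$; the latter is implicit in the model and is indeed what the paper uses, so your argument stands, but the citation should be adjusted.
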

\begin{proof}
	The proof is provided in Appendix D.
\end{proof}

The sensing MI is defined as the MI between the radar reflected signal and the sensing channel under the knowledge of the signal, thus, the expression of sensing MI is \cite{bell1993information}
	
\begin{equation}
I(\mathbf{G};{{\mathbf{Y}}_{\text{rad}}}|\mathbf{\hat{X}})=h({{\mathbf{Y}}_{\text{rad}}}|\mathbf{\hat{X}})-h({{\mathbf{W}}_{2}})\label{eq14}.
\end{equation}

Affected by the demodulation error, it is impossible to determine the distribution of ${{\mathbf{W}}_{2}}$. The upper and lower bounds of sensing MI are given as follows.

\begin{theorem}
	The lower and upper bounds of sensing MI are respectively
	\begin{equation}
	I_{\rm{low}}^{\rm{rad}}={{N}_{r}}{{\log }_{2}}\det \left( {{\mathbf{I}}_{{{N}_{x}}{{N}_{c}}}}+\hat{\mathbf{X}}{{\Sigma }_{\mathbf{G}}}{{{\hat{\mathbf{X}}}}^{H}}\mathbf{R}_{{{W}_{2}}}^{-1} \right)\label{eq15},
	\end{equation}
	\begin{equation}
	\begin{aligned}
	I_{\rm{high}}^{\rm{com}}=&I_{\rm{low}}^{\rm{rad}}+{{N}_{r}}{{\log }_{2}}\det ({{\mathbf{R}}_{{{W}_{2}}}})\\
	 &-{{N}_{x}}{{\log }_{2}}\left[ \det \left( {\bar{\mathbf{G}}^H}{{\Sigma }_{\mathbf{E}}}\bar{\mathbf{G}}+{{\mathbf{I}}_{{{N}_{r}}{{N}_{c}}}}\right)  \right]\label{eq16},
	\end{aligned}
	\end{equation}
	where ${{\Sigma}_{\mathbf{E}}}=E\left[ {{\mathbf{E}}^{H}}\mathbf{E}\right] /{{N}_{x}}$ and $\bar{\mathbf{G}}={\rm{diag}}\left( \mathbf{G}(p) \right)$.
\end{theorem}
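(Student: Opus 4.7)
The plan is to derive both bounds by exploiting the same extremal property of the Gaussian distribution that was used in Theorem~1, now applied to the sensing signal model $\mathbf{Y}_{\mathrm{rad}}=\hat{\mathbf{X}}\mathbf{G}+\mathbf{W}_{2}$ conditioned on $\hat{\mathbf{X}}$. The key difficulty is that the aggregate noise $\mathbf{W}_{2}=\mathbf{E}\mathbf{G}+\mathbf{W}$ contains a product of two independent random matrices, so its marginal distribution is non-Gaussian even though its second-order statistics $\mathbf{R}_{W_{2}}$ are already pinned down by Lemma~2.

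For the lower bound (\ref{eq15}) I would invoke the standard ``Gaussian noise is worst case'' result: for a fixed Gaussian input covariance $\Sigma_{\mathbf{G}}$ and fixed noise covariance $\mathbf{R}_{W_{2}}$, replacing the true $\mathbf{W}_{2}$ by a zero-mean circularly symmetric Gaussian with the same covariance can only decrease $I(\mathbf{G};\mathbf{Y}_{\mathrm{rad}}\,|\,\hat{\mathbf{X}})$. Because $\mathbf{G}$ is itself Gaussian and the $N_{r}$ columns of $\mathbf{Y}_{\mathrm{rad}}$ are conditionally independent given $\hat{\mathbf{X}}$, the MIMO Gaussian capacity formula then immediately produces $N_{r}\log_{2}\det\bigl(\mathbf{I}_{N_{x}N_{c}}+\hat{\mathbf{X}}\Sigma_{\mathbf{G}}\hat{\mathbf{X}}^{H}\mathbf{R}_{W_{2}}^{-1}\bigr)$, which is exactly $I_{\mathrm{low}}^{\mathrm{rad}}$.

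For the upper bound (\ref{eq16}) I would expand the mutual information by the chain rule as $I(\mathbf{G};\mathbf{Y}_{\mathrm{rad}}\,|\,\hat{\mathbf{X}})=h(\mathbf{Y}_{\mathrm{rad}}\,|\,\hat{\mathbf{X}})-h(\mathbf{W}_{2}\,|\,\mathbf{G})$, using independence of $\mathbf{W}_{2}$ from $\hat{\mathbf{X}}$ together with the observation that, conditioned on $\mathbf{G}$, $\mathbf{E}\mathbf{G}$ is Gaussian because the demodulation error $\mathbf{E}$ is itself modelled as Gaussian. I would then upper-bound $h(\mathbf{Y}_{\mathrm{rad}}\,|\,\hat{\mathbf{X}})$ by the Gaussian entropy matching the covariance $\hat{\mathbf{X}}\Sigma_{\mathbf{G}}\hat{\mathbf{X}}^{H}+\mathbf{R}_{W_{2}}$ and evaluate $h(\mathbf{W}_{2}\,|\,\mathbf{G})$ exactly by stacking the per-row conditional covariances into the block-diagonal expression $\bar{\mathbf{G}}^{H}\Sigma_{\mathbf{E}}\bar{\mathbf{G}}+\mathbf{I}_{N_{r}N_{c}}$. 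The determinant identity $\det\bigl(\hat{\mathbf{X}}\Sigma_{\mathbf{G}}\hat{\mathbf{X}}^{H}+\mathbf{R}_{W_{2}}\bigr)=\det\bigl(\mathbf{I}+\hat{\mathbf{X}}\Sigma_{\mathbf{G}}\hat{\mathbf{X}}^{H}\mathbf{R}_{W_{2}}^{-1}\bigr)\det(\mathbf{R}_{W_{2}})$ then peels off $I_{\mathrm{low}}^{\mathrm{rad}}$ and leaves the two correction terms stated in (\ref{eq16}).

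The step I expect to be hardest is the closed-form evaluation of $h(\mathbf{W}_{2}\,|\,\mathbf{G})$. Because $\mathbf{E}$ is block-diagonal across subcarriers, the rows of $\mathbf{E}\mathbf{G}$ are conditionally independent given $\mathbf{G}$, but each row sees only the corresponding subcarrier block $\mathbf{G}(p)$; assembling the $N_{c}N_{x}$ per-row Gaussian entropies into a single determinant over $\bar{\mathbf{G}}^{H}\Sigma_{\mathbf{E}}\bar{\mathbf{G}}+\mathbf{I}_{N_{r}N_{c}}$ with the correct $N_{x}$ prefactor, and tracking how $\sigma_{n}^{2}$ is absorbed into the identity term, is the real technical burden. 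Once that bookkeeping is settled, the remainder mirrors the Gaussian maximum-entropy calculation already employed for Theorem~1.
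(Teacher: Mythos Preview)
Your upper-bound argument via $h(\mathbf{Y}_{\mathrm{rad}}\mid\hat{\mathbf{X}})-h(\mathbf{W}_{2}\mid\mathbf{G})$, with the first term bounded by a matched Gaussian entropy and the second evaluated exactly under the Gaussianity of $\mathbf{E}$, is precisely the route the paper takes (by direct analogy with Appendix~C). For the lower bound your reasoning is also correct but packaged differently: the paper, mirroring Appendix~B, writes $I(\mathbf{G};\mathbf{Y}_{\mathrm{rad}}\mid\hat{\mathbf{X}})=h(\mathbf{G}\mid\hat{\mathbf{X}})-h(\mathbf{G}\mid\mathbf{Y}_{\mathrm{rad}},\hat{\mathbf{X}})$, computes the first term exactly (since $\mathbf{G}$ is Gaussian) and upper-bounds the second by the Gaussian entropy with the LMMSE error covariance, whereas you invoke the worst-case-Gaussian-noise theorem as a black box. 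The two arguments are equivalent---the standard proof of that theorem \emph{is} the Appendix-B decomposition---so the only practical difference is that the paper's version makes the conditional-error covariance explicit before collapsing it to the log-det, while yours reaches the formula in a single step.
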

\begin{proof}
	The proof of (\ref{eq15}) and (\ref{eq16}) are similar to the derivation in Appendix B and C, respectively.
\end{proof}

\section{MIMO-OFDM ISAC Waveform Optimization}

In this section, MIMO-OFDM ISAC waveform is optimized by maximizing the weighted sum of communication and sensing MIs, which can be formulated as
\begin{equation}
\begin{aligned}
& \underset{\mathbf{X}}{\mathop{\max }}\,\quad{{F}_{\omega }}=\frac{{{\omega }_{r}}}{{{F}_{r}}}I(\mathbf{G};{{\mathbf{Y}}_{\rm{rad}}}|\hat{\mathbf{X}})+\frac{1-{{\omega }_{r}}}{{{F}_{c}}}I(\mathbf{X};\mathbf{Y}|\mathbf{H}), \\ 
& \text{  s}\text{.t.}\quad{\rm{tr}}\left( \mathbf{X}{{\mathbf{X}}^{H}} \right)\le E,\label{eq17} \\ 
\end{aligned}
\end{equation}
where ${{\omega }_{r}}$ and ${{\omega }_{c}}$ are the weighting factors of sensing and communication, respectively. ${{F}_{r}}$ and ${{F}_{c}}$ are the maximum sensing MI and maximum communication MI, respectively.

In order to optimize (\ref{eq17}), we use singular value decomposition (SVD) to decompose the correlation matrix, i.e., ${{\Sigma }_{\mathbf{G}}}={{\mathbf{U}}_{\mathbf{G}}}{{\Lambda }_{\mathbf{G}}}\mathbf{U}_{\mathbf{G}}^{H}$ and $\mathbf{H}{{\mathbf{H}}^{H}}={{\mathbf{U}}_{\mathbf{H}}}{{\Lambda }_{\mathbf{H}}}\mathbf{U}_{\mathbf{H}}^{H}$, where ${{\mathbf{U}}_{\mathbf{G}}}$ and ${{\mathbf{U}}_{\mathbf{H}}}$ are unitary matrices, while ${{\Lambda }_{\mathbf{G}}}$ and ${{\Lambda }_{\mathbf{H}}}$ are diagonal matrices, satisfying
$
{{\Lambda }_{\mathbf{G}}}=\text{diag}\{{{\mu }_{11}},\cdots ,{{\mu }_{{{N}_{t}}{{N}_{c}}}}\}\label{eq18}
$
and $
{{\Lambda }_{\mathbf{H}}}=\text{diag}\{{{\lambda }_{11}},\cdots ,{{\lambda }_{{{N}_{t}}{{N}_{c}}}}\}\label{eq19}.
$

Define $\mathbf{\Xi }=[{{\xi }_{ij}}]$, satisfying $\mathbf{\Xi}=\mathbf{U}_{\mathbf{G}}^{H}{{\mathbf{X}}^{H}}\mathbf{X}{{\mathbf{U}}_{\mathbf{G}}}=\mathbf{U}_{\mathbf{H}}^{H}{{\Sigma }_{\mathbf{X}}}{{\mathbf{U}}_{\mathbf{H}}}$. Based on the Hadamard's inequality, we can obtain $\det (\mathbf{\Xi })\le \prod\nolimits_{i=1}^{{{N}_{t}}{{N}_{c}}}{{{\xi }_{ii}}}$. Hence, according to the Cauchy-Schwarz inequality,  can be transformed into
\begin{equation} 
\begin{aligned}
&{{\mathbf{R}}_{{{W}_{1}}}}=\left( {\rm{tr}}(E[{{\mathbf{X}}^{H}}\mathbf{X}]{{\Sigma }_{\mathbf{G}}})/{{N}_{x}}+\sigma _{n}^{2} \right){{\mathbf{I}}_{{{N}_{r}}}}\\
&\le \left( \left( E\sum\nolimits_{i=1}^{{{N}_{t}}{{N}_{c}}}{{{\mu }_{ii}}} \right)/{{N}_{x}}+\sigma _{n}^{2} \right){{\mathbf{I}}_{{{N}_{r}}}}=\sigma _{{{n}'}}^{2}{{\mathbf{I}}_{{{N}_{r}}}}\label{eq20}.
\end{aligned}
\end{equation}

Since the demodulation error rate of communication system is low enough \cite{zheng2017super}, it can be assumed that $\mathbf{E}=\mathbf{0}$, i.e., ${{\mathbf{R}}_{{{W}_{2}}}}=\sigma _{n}^{2}{{\mathbf{I}}_{{{N}_{x}}{{N}_{c}}}}$. Hence, the lower bounds of communication and sensing MI can be simplified respectively as
\begin{equation}
\begin{aligned}
I(\mathbf{X};\mathbf{Y}|\mathbf{H})\le \sum\limits_{i=1}^{{{N}_{t}}{{N}_{c}}}{\left\{ {{N}_{x}}{{\log }_{2}}({{\lambda }_{ii}}{{\xi }_{ii}}/\sigma _{{{n}'}}^{2}+1) \right\}}\label{eq21},
\end{aligned}
\end{equation}
\begin{equation}
\begin{aligned}
I(\mathbf{G};{{\mathbf{Y}}_{\rm{rad}}}|\mathbf{X})\le \sum\limits_{i=1}^{{{N}_{t}}{{N}_{c}}}{\left\{ {{N}_{r}}{{\log }_{2}}({{\mu }_{ii}}{{\xi }_{ii}}/\sigma _{n}^{2}+1) \right\}}\label{eq22}.
\end{aligned}
\end{equation}

(\ref{eq21}) and (\ref{eq22}) are obtained based on the Sylvester's determinant equation, i.e., $\det (\mathbf{AB}+\sigma _{n}^{2}{{\mathbf{I}}_{N}})={{(\sigma _{n}^{2})}^{n-m}}\det (\mathbf{BA}+\sigma _{n}^{2}{{\mathbf{I}}_{M}})$. Substituting (\ref{eq21}) and (\ref{eq22}) into (\ref{eq17}), it can be transformed into 
\begin{equation}
\begin{aligned}
&\underset{\Xi }{\mathop{\max }}\,&{{F}_{\omega }}(\mathbf{\Xi})&=\sum\limits_{i=1}^{{{N}_{t}}{{N}_{c}}}\left\{ \frac{1-{{\omega }_{r}}}{{{F}_{c}}}{{N}_{x}}{{\log }_{2}}({{\mu }_{ii}}{{\xi }_{ii}}/\sigma _{{n}'}^{2}+1)\right.\\
& & &\left.+\frac{{{\omega }_{r}}}{{{F}_{r}}}{{N}_{r}}{{\log }_{2}}({{\lambda }_{ii}}{{\xi }_{ii}}/\sigma _{n}^{2}+1) \right\}, \\ 
& \text{ s}\text{.t}\text{.}&{\rm{tr}}(\mathbf{\Xi })&\le E,\text{ }{{\xi }_{ii}}\ge 0,\text{ }1\le i\le {{N}_{c}}{{N}_{t}}\label{eq23}.  
\end{aligned}
\end{equation}

Utilizing Karush-Kuhn-Tucker (KKT) condition for optimization, the Lagrangian function is
\begin{equation}
\begin{aligned}
 L(\mathbf{\Xi})=&-\sum\limits_{i=1}^{{{N}_{t}}{{N}_{c}}}\left\{ \frac{1-{{\omega }_{r}}}{{{F}_{c}}}{{N}_{x}}{{\log }_{2}}({{\lambda }_{ii}}{{\xi }_{ii}}/\sigma _{{{n}'}}^{2}+1) \right. \\
	&\left.+\frac{{{\omega }_{r}}}{{{F}_{r}}}{{N}_{r}}{{\log }_{2}}({{\mu }_{ii}}{{\xi }_{ii}}/\sigma _{n}^{2}+1) \right\} \\ 
& +\alpha (\sum\limits_{i=1}^{{{N}_{t}}{{N}_{c}}}{{{\xi }_{ii}}}-E)++{{\gamma }_{i}}(-{{\xi }_{ii}})\label{eq24}.  
\end{aligned}
\end{equation}
where $\alpha$ is the Lagrangian multiplier.

According to (\ref{eq24}), setting the partial derivative ${{\nabla }_{{{\xi }_{ii}}}}L(\mathbf{\Xi })$ of ${{\xi }_{ii}}$ to 0, and supplementing other conditions.
\begin{equation}
\frac{(1-{{\omega }_{r}}){{N}_{x}}}{{{F}_{c}}\ln 2}\frac{{{\lambda }_{ii}}}{\sigma _{{{n}'}}^{2}+{{\lambda }_{ii}}{{\xi }_{ii}}}+\frac{{{\omega }_{r}}{{N}_{r}}}{{{F}_{r}}\ln 2}\frac{{{\mu }_{ii}}}{\sigma _{n}^{2}+{{\mu }_{ii}}{{\xi }_{ii}}}=\alpha -{{\gamma }_{i}}\label{eq25},
\end{equation}
\begin{equation}
\alpha \left( \sum\limits_{i=1}^{{{N}_{t}}{{N}_{c}}}{{{\xi }_{ii}}-E} \right)=0\label{}\label{eq26},	
\end{equation}
\begin{equation}
{{\gamma }_{i}}{{\xi }_{ii}}=0\label{eq27},
\end{equation}
\begin{equation}
\alpha \ge 0,{{\gamma }_{i}}\ge 0,1\le i\le {{N}_{t}}{{N}_{c}}\label{eq28}.
\end{equation}

Let ${{v}_{i}}=\frac{{{\lambda }_{ii}}}{\sigma _{{{n}'}}^{2}}$, ${{\varphi }_{i}}=\frac{{{\mu }_{ii}}}{\sigma _{n}^{2}}$, $\varepsilon =\frac{{{\omega }_{r}}{{N}_{r}}}{{{F}_{r}}\ln 2}$ and $\eta =\frac{(1-{{\omega }_{r}}){{N}_{x}}}{{{F}_{c}}\ln 2}$. Obviously, if ${{\xi }_{ii}}\ne 0$, ${{\gamma }_{i}}$ is 0. Thus, the optimal solution can be obtained as
\begin{equation}\label{eq29}
{{\xi }_{ii}}\!\!=\!\!\left\{ \begin{aligned}
& \frac{1}{2}\!\left[\!\frac{1}{\zeta }(\varepsilon +\eta )\!-\!(\frac{1}{{{\nu }_{i}}}\!+\!\frac{1}{{{\varphi }_{i}}})+\right.\\
&\left.\;\;\sqrt{{{[(\frac{1}{{{\nu }_{i}}}\!-\!\frac{1}{{{\varphi }_{i}}})\!+\!\frac{1}{\alpha }(\eta \!-\!\varepsilon )]}^{2}}\!+\!4\frac{\varepsilon \eta }{{{\alpha }^{2}}}} \right]^{+},\!\!\!\!
&{{\nu }_{i}}\ne 0, {{\varphi }_{i}}\ne 0 \\ 
&0,&{{\nu }_{i}}=0, {{\varphi }_{i}}=0 \\ 
\end{aligned} \right.,
\end{equation}
\begin{equation}\label{eq30}
\sum\limits_{i=1}^{{{N}_{t}}{{N}_{c}}}{{{\xi }_{ii}}}=E.
\end{equation}
where ${{[x]}^{+}}=\max \{x,0\}$.

The optimal $\alpha $ can be found through binary search in $0<1/\alpha <1/\min \{\varepsilon /(1/{{\nu }_{i}}+E)+\eta /(1/{{\varphi }_{i}}+E)\}$.

\begin{figure}[!t]
	\centering
	\includegraphics[width=0.5\textwidth]{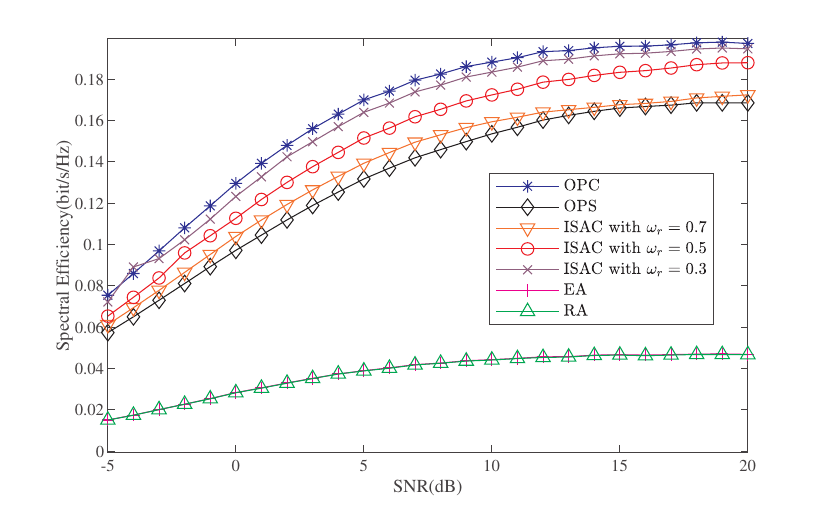}
	\caption{Spectral efficiency vs. SNR for different schemes.}
	\label{fig1}
\end{figure}
\begin{figure}[!t]
	\centering
	\includegraphics[width=0.5\textwidth]{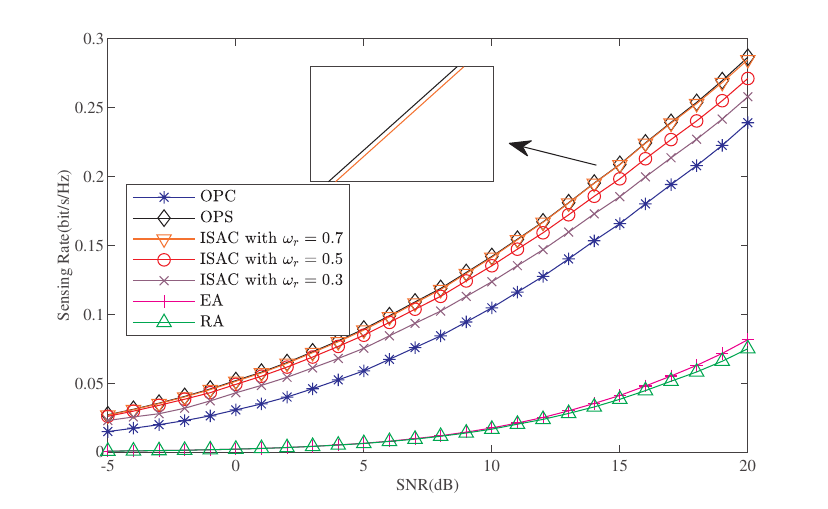}
	\caption{Sensing Rate vs. SNR for different schemes.}
	\label{fig2}
\end{figure}

\section{Simulation Results}
In this section, the above optimization results are simulated and analyzed, compared with other schemes, including optimal communication scheme (OPC), optimal sensing scheme (OPS), equal allocation (EA) and random allocation (RA). It is considered that the antenna arrays of UE and BS have 4 transmit antennas and 4 receive antennas, i.e., ${{N}_{t}}=4$ and ${{N}_{r}}=4$, respectively. The transmit antennas transmit OFDM signals with 32 IFFT points, i.e., ${{N}_{c}}=32$. The channels in this paper are assumed to be frequency-selective MIMO channels, and remain unchanged during the symbol duration with ${{N}_{x}}=10$. The channel is generated based on the Kronecker model with the maximum correlation coefficient of 0.5. And it is assumed that communication and sensing channel have 4 paths. The simulation results are obtained under the average of 4000 Monte Carlo simulations. 

Fig. \ref{fig1} and Fig. \ref{fig2} describe the spectral efficiency and sensing rate of different optimization schemes versus the SNR, respectively. With the increase of the SNR, the spectral efficiency and sensing rate of several schemes gradually increases. In general, the communication performance of the OPC scheme is much better than other schemes, while the sensing performance of the OPS is the best. The performance of the ISAC scheme with different sensing weighting factors are between OPC and OPS schemes, and EA and RA schemes are the worst. Additionally, with the sensing weighting factor increasing, the communication performance of ISAC gets closer to the OPS scheme, and on the contrary, the sensing performance of ISAC scheme nearly coincides with that of OPS scheme. ISAC scheme reduces the great impact of sensing on communication performance or communication on sensing performance in the integrated system.

Fig. \ref{fig3} shows the weighted MI of different schemes versus the sensing weighting factor, with SNR=1 dB. According to Fig. \ref{fig3}, it can be found that the total performance of ISAC increases at first and then decreases with the increase of the weighting factor, while other schemes remain unchanged. When ${{\omega }_{r}}=0$ and ${{\omega }_{r}}=1$, the total performance of the ISAC scheme coincides with that of the OPC and OPS schemes, respectively. 

Fig. \ref{fig4} shows the trade-off curve of ISAC scheme between the OPC scheme and the OPS scheme under different SNR. The sensing weighting factor ranges from 0 to 1, following the direction of the arrow. As the weighting factor increases, the ISAC scheme changes between the OPC scheme and the OPS scheme. Hence, in order to meet the specific uplink requirements, it is crucial to choose an appropriate sensing weighting factor.

\begin{figure}[!t]
	\centering
	\includegraphics[width=0.5\textwidth]{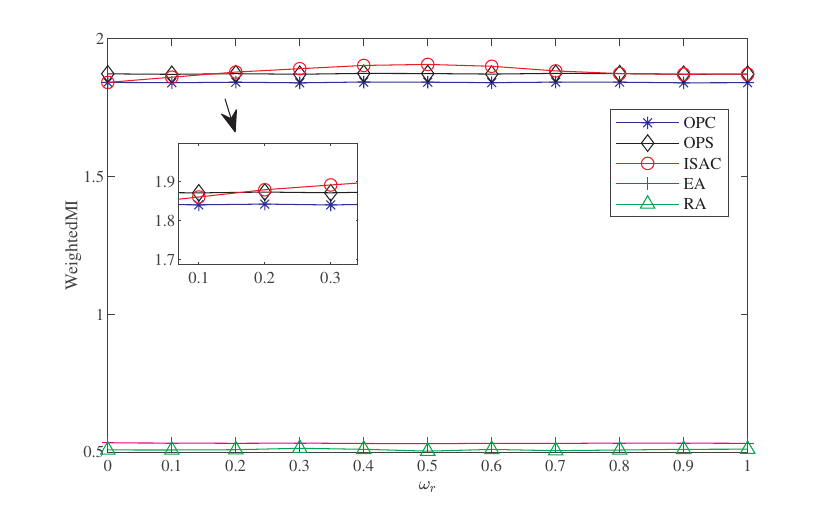}
	\caption{Weighted MI vs. weighting coefficient for different waveform optimization schemes}
	\label{fig3}
\end{figure}
\begin{figure}[!t]
	\centering
	\includegraphics[width=0.5\textwidth]{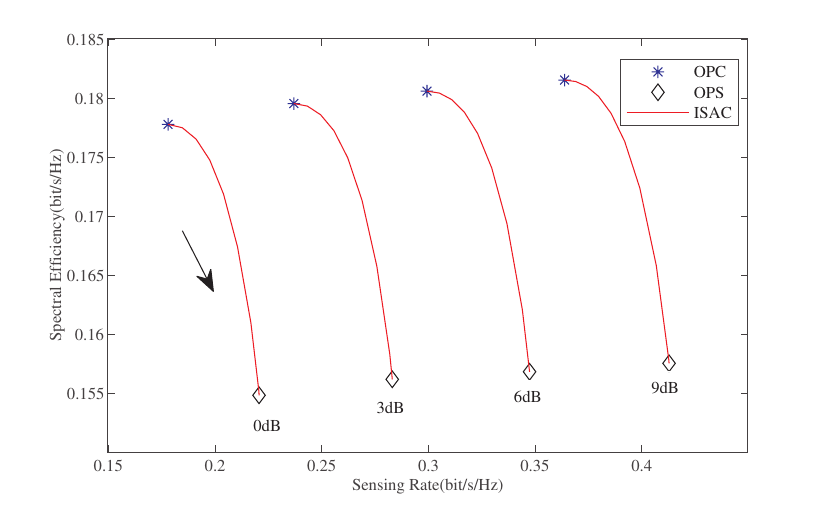}
	\caption{Trade-off curve.}
	\label{fig4}
\end{figure}

\section{Conclusion}
In this paper, the upper and lower bounds of sensing and communication MI for uplink MIMO-OFDM ISAC systems were studied considering the impact between communication and sensing and the waveform were optimized based on the weighted sum of communication and sensing MIs. The simulation results indicate the trade-off between communication and sensing in uplink ISAC system. Based on our work, the synchronization between BS and UE can be investigated in the future.

\begin{appendices} 
\section{Proof of Lemma1}\label{app:A}
\begin{proof}
Transforming $\mathbf{H}(p)$ and $\mathbf{G}(p)$ into time-domain form, we can obtain
\begin{equation}\label{eq32}
\mathbf{H}(p)=\mathbf{\Omega} (p)\cdot \mathbf{h},
\end{equation}
\begin{equation}\label{eq33}
\mathbf{G}(p)=\mathbf{\Omega} (p)\cdot \mathbf{g},
\end{equation}
where 
$\mathbf{\Omega} (p)={{I}_{{{N}_{t}}}}\otimes \mathbf{\omega} (p),$\\
$\mathbf{\omega} (p)=\left[ 1,{{e}^{-j2\pi p/{{N}_{c}}}},\ldots ,{{e}^{-j2\pi pL/{{N}_{c}}}} \right],$ \\
$\mathbf{h}=\left[ {{h}^{1}},\ldots ,{{h}^{{{N}_{r}}}} \right]\in {{\mathbb{C}}^{{{N}_{t}}(L+1)\times {{N}_{r}}}},$\\
${{h}^{\nu }}={{\left[ {{({{h}_{1\nu }})}^{T}},\ldots ,{{({{h}_{{{N}_{t}}\nu }})}^{T}} \right]}^{T}},$\\
${{h}_{\mu \nu }}={{\left[ {{h}_{\mu \nu }}(0),{{h}_{\mu \nu }}(1),\cdots ,{{h}_{\mu \nu }}(L) \right]}^{T}}\in {{\mathbb{C}}^{(L+1)\times 1}},$ and $\mathbf{g}$ has the same form as $\mathbf{h}$.

Thus, $\mathbf{H}$ and $\mathbf{G}$ can be expressed as
\begin{equation}\label{eq34}
\mathbf{H}=\mathbf{\Omega} \cdot \mathbf{h},
\end{equation}
\begin{equation}\label{eq35}
\mathbf{G}=\mathbf{\Omega} \cdot \mathbf{g},
\end{equation}
where
$\mathbf{H}={{[{{\mathbf{H}}^{T}}(0),\cdots ,{{\mathbf{H}}^{T}}({{N}_{c}}-1)]}^{T}}\in {{\mathbb{C}}^{{{N}_{c}}{{N}_{t}}\times {{N}_{r}}}},$
$\mathbf{G}={{[{{\mathbf{G}}^{T}}(0),\cdots ,{{\mathbf{G}}^{T}}({{N}_{c}}-1)]}^{T}}\in {{\mathbb{C}}^{{{N}_{c}}{{N}_{t}}\times {{N}_{r}}}},$
$\mathbf{\Omega} ={{[{{\mathbf{\Omega} }^{T}}(0),\ldots ,{{\mathbf{\Omega} }^{T}}({{N}_{c}}-1)]}^{T}}\in {{\mathbb{C}}^{{{N}_{c}}{{N}_{t}}\times {{N}_{t}}(L+1)}}.$

Hence, (\ref{eq5}) can be transformed by
\begin{equation}\label{eq36}
\begin{aligned}
& E\left[ {{({{\mathbf{W}}_{1}})}^{H}}{{\mathbf{W}}_{1}}\right]  \\ 
& =E\left[ {{\mathbf{G}}^{H}}{{\mathbf{X}}^{H}}\mathbf{X}\mathbf{G}\right] +{{N}_{x}}{{N}_{c}}\sigma _{n}^{2}{{\mathbf{I}}_{{{N}_{r}}}} \\ 
& =E\left\lbrace {{[{{\mathbf{g}}^{1}}\!\!,\ldots,\! {{\mathbf{g}}^{{{N}_{r}}}}]}^{H}}{{\mathbf{\Omega} }^{H}}{{\mathbf{X}}^{H}}\mathbf{X}\mathbf{\Omega} [{{\mathbf{g}}^{1}}\!\!,\ldots,\! {{\mathbf{g}}^{{{N}_{r}}}}]\right\rbrace +{{N}_{x}}{{N}_{c}}\sigma _{n}^{2}{{\mathbf{I}}_{{{N}_{r}}}} \\ 
& ={\rm{diag}}\left\{ {\rm{tr}}(E[{{\mathbf{\Omega} }^{H}}{{\mathbf{X}}^{H}}\mathbf{X}\mathbf{\Omega} ] E[{{\mathbf{g}}^{\nu}}{{({{\mathbf{g}}^{\nu}})}^{H}}]) \right\}+{{N}_{x}}{{N}_{c}}\sigma _{n}^{2}{{\mathbf{I}}_{{{N}_{r}}}}, \\ 
\end{aligned}
\end{equation}
where $E [{{\mathbf{g}}^{\nu }}{{({{\mathbf{g}}^{\nu }})}^{H}}]$ at different receive antenna is equivalent, and satisfies ${{\Sigma }_{\mathbf{G}}}=E [\mathbf{\Omega} {{\mathbf{g}}^{1}}{{({{\mathbf{g}}^{1}})}^{H}}{{\mathbf{\Omega} }^{H}}]=\cdots =E [\mathbf{\Omega} {{\mathbf{g}}^{{{N}_{r}}}}{{({{\mathbf{g}}^{{{N}_{r}}}})}^{H}}{{\mathbf{\Omega} }^{H}}]$, which is substituted into (\ref{eq36}) to get (\ref{eq6}).
\end{proof}

\section{Proof of Theorem 2}\label{app:B}
\begin{proof}
Rewritten the expression of the communication MI as
\begin{equation}\label{eq37}
\begin{aligned}
& I(\mathbf{X};\mathbf{Y}|\mathbf{H})=h(\mathbf{X}|\mathbf{H})-h(\mathbf{X}|\mathbf{Y},\mathbf{H}) \\ 
& =h(\tilde{\mathbf{X}}|\mathbf{h})-h(\tilde{\mathbf{X}}|\mathbf{Y},\mathbf{h}), \\ 
\end{aligned}
\end{equation}
where $\tilde{\mathbf{X}}=\mathbf{X}\mathbf{\Omega} $.

Obviously, the first term in (\ref{eq37}) follows complex Gaussian distribution which can be expressed as
\begin{equation}\label{eq38} h(\tilde{\mathbf{X}}|\mathbf{h})={{N}_{x}}\sum\limits_{p=0}^{{{N}_{c}}-1}{E\left\{ {{\log }_{2}}\pi e\det \left( {{\mathbf{Q}}_{p}} \right) \right\}},
\end{equation}
 where ${{\mathbf{Q}}_{p}}={{\mathbf{\Omega} }^{H}}(p){{\mathbf{X}}^{H}}(p)\mathbf{X}(p)\mathbf{\Omega} (p)/{{N}_{x}}$. The second term is constrained by the complex Gaussian distribution with
\begin{equation} \label{eq39}
\begin{aligned}
h(\tilde{\mathbf{X}}|\mathbf{Y},\mathbf{h})\le &{{N}_{x}}\sum\limits_{p=0}^{{{N}_{c}}-1}E\Big\{  {{\log }_{2}}\pi e\det \Big( {{\mathbf{Q}}_{p}}\\
&-{{\mathbf{Q}}_{p}}\mathbf{h}{{\left( {{\mathbf{h}}^{H}}{{\mathbf{Q}}_{p}}\mathbf{h}+{{\mathbf{R}}_{{{W}_{1}}}} \right)}^{-1}}{{\mathbf{h}}^{H}}{{\mathbf{Q}}_{p}} \Big) \Big\} .
\end{aligned}
\end{equation}

Hence, the lower bound of communication MI is
\begin{equation}
\begin{aligned}
 &I(\mathbf{X};\mathbf{Y}|\mathbf{H})\ge {{N}_{x}}\sum\limits_{p=0}^{{{N}_{c}}-1}{E\left\{ {{\log }_{2}}\det \left( {\mathbf{I}}_{{N}_{r}}+{{\mathbf{h}}^{H}}{{\mathbf{Q}}_{p}}\mathbf{h}\mathbf{R}_{{{W}_{1}}}^{-1} \right) \right\}} \\ 
& ={{N}_{x}}E\left\{ {{\log }_{2}}\det \left( {{\mathbf{I}}_{{{N}_{r}}{{N}_{c}}}}+{{\mathbf{H}}^{H}}{{\Sigma }_{\mathbf{X}}}\mathbf{H}({{\mathbf{I}}_{{{N}_{c}}}}\otimes \mathbf{R}_{{{W}_{1}}}^{-1}) \right) \right\}, \\ 
\end{aligned}
\end{equation}
where ${{\Sigma }_{\mathbf{X}}}=E\left\{ {{\mathbf{X}}^{H}}\mathbf{X} \right\}/{{N}_{x}}$, and $\mathbf{H}$ is transformed into $\mathbf{H}=\text{diag}\left\{ \mathbf{H}\left( p \right) \right\}$.
\end{proof}

\section{Proof of Theorem 2}
\begin{proof}
Rewritten the expression of the communication MI in another form as
\begin{equation}\label{eq41}
I(\mathbf{X};\mathbf{Y}|\mathbf{H})=h(\mathbf{Y}|\mathbf{H})-h(\mathbf{Y}|\mathbf{X},\mathbf{H}).
\end{equation}

Assuming that $\mathbf{Y}$ follows complex Gaussian distribution, thus, we can obtain
\begin{equation}\label{eq42}
h(\mathbf{Y}|\mathbf{H})\le {{N}_{x}}\sum\limits_{p=0}^{{{N}_{c}}-1}{E\left\{ {{\log }_{2}}\pi e\det \left( {{\mathbf{R}}_{{{W}_{1}}}}+{{\mathbf{h}}^{H}}{{\mathbf{Q}}_{p}}\mathbf{h} \right) \right\}}.
\end{equation}

Since $\mathbf{G}$ follows complex Gaussian distribution, the second term in (\ref{eq41}) can be expressed as
\begin{equation}\label{eq43}
\begin{aligned}
h(\mathbf{Y}|\mathbf{X},\mathbf{H}) &= h(\mathbf{G}\mathbf{X}+\mathbf{W}|\mathbf{X}) \\
&= {{N}_{x}}{{N}_{r}}{{N}_{c}}{{\log }_{2}}\pi +{{N}_{x}}{{N}_{r}}{{N}_{c}}{{\log }_{2}}e\\
& +{{N}_{r}}{{\log }_{2}}\left[ \det (\mathbf{X}{{\Sigma }_{\mathbf{G}}}{{\mathbf{X}}^{H}}+\sigma _{n}^{2}{{\mathbf{I}}_{{{N}_{x}}{{N}_{c}}}}) \right].
\end{aligned}
\end{equation}
 Substituting (\ref{eq42}) and (\ref{eq43}) into (\ref{eq41}) to obtain (\ref{eq9}).
\end{proof}
\section{Proof of Lemma 2}
\begin{proof}
Assuming that $\mathbf{E}$ is independent of each subcarrier, thus, (\ref{eq12}) can be transformed into  
\begin{equation}
\begin{aligned}
& E\left[ {{({{\mathbf{W}}_{2}})}^{H}}{{\mathbf{W}}_{2}}\right] =E\left[ \mathbf{E}\mathbf{G}{{\mathbf{G}}^{H}}{{\mathbf{E}}^{H}}\right] +{{N}_{r}}\sigma _{n}^{2}{{\mathbf{I}}_{{{N}_{x}}{{N}_{c}}}} \\ 
& =E\left\lbrace {{\left[ {{\mathbf{e}}^{1}},\ldots ,{{\mathbf{e}}^{{{N}_{r}}}}\right] }^{H}}\mathbf{G}{{\mathbf{G}}^{H}}\left[ {{\mathbf{e}}^{1}},\ldots ,{{\mathbf{e}}^{{{N}_{r}}}}\right] \right\rbrace +{{N}_{r}}\sigma _{n}^{2}{{\mathbf{I}}_{{{N}_{x}}{{N}_{c}}}} \\ 
& = {\rm{diag}} \left\lbrace  {\mathbf{R}}'(k) \right\rbrace +{{N}_{r}}\sigma _{n}^{2}{{\mathbf{I}}_{{{N}_{x}}{{N}_{c}}}}, \\ 
\end{aligned}
\end{equation}
where ${\mathbf{R}}'(k)=tr(E[\mathbf{G}(p){{\mathbf{G}}^{H}}(p)]{{\Sigma }_{\mathbf{E}(p)}}),k=(p+1)h$, $p=0,1,\cdots ,{{N}_{c}}-1$, $h=1,\cdots ,{{N}_{t}}$, ${{\Sigma }_{\mathbf{E}(p)}}=E [{{\mathbf{E}}^{H}}(p)\mathbf{E}(p)]/{{N}_{x}}$. Expanding $E[\mathbf{G}(p){{\mathbf{G}}^{H}}(p)]$ as
\begin{equation}\label{eq45}
\begin{aligned}
&E\left[ \mathbf{G}(p){{\mathbf{G}}^{H}}(p) \right]\\
&=E\left[ \sum\nolimits_{l=0}^{L}{{{\mathbf{G}}_{l}}}{{e}^{-j(2\pi /{{N}_{c}})lp}}\sum\nolimits_{l=0}^{L}{\mathbf{G}_{l}^{H}}{{e}^{j(2\pi /{{N}_{c}})lp}} \right].
\end{aligned}
\end{equation}

Since the fading of each path is independent of each other and follows CSGN distribution with zero-mean, i.e., $E[{{g}_{\mu \nu }}(l){{g}_{\mu \nu }}({l}')]=0$, thus,
\begin{equation}\label{eq46}
E\left[ \mathbf{G}(p){{\mathbf{G}}^{H}}(p) \right]=E\left[ \sum\nolimits_{l=0}^{L}{{{\mathbf{G}}_{l}}\mathbf{G}_{l}^{H}} \right].
\end{equation}

Assuming that the demodulation error correlation matrices of different subcarriers are independent and follow identical distribution, i.e., ${{\Sigma }_{\mathbf{E}(p)}}={{\Sigma }_{{\bar{\mathbf{E}}}}}$, we get (\ref{eq13}).
\end{proof}
\end{appendices}

\section*{Acknowledgment}
This work was supported in part by the National Natural Science Foundation of China (NSFC) under Grant 92267202,
in part by the National Key Research and Development Program of China under Grant 2020YFA0711302, and in part by the National Natural Science Foundation of China (NSFC) under Grant 62271081, and U21B2014.

\bibliographystyle{IEEEtran}
\bibliography{wenxian}

\end{document}